\documentclass{article}
\usepackage[utf8]{inputenc}
\usepackage{graphicx}
\usepackage{xspace}
\usepackage{hyperref}
\usepackage{amsthm,amsfonts,amssymb,amsmath,array}
\usepackage{fullpage}

\usepackage{thmtools,thm-restate}
\usepackage{cleveref}
\usepackage{amsmath}
\DeclareMathOperator{\dist}{d}
\usepackage{mathtools}
\graphicspath{{./figures/}}

\newtheorem{theorem}{Theorem}
\newtheorem{lemma}[theorem]{Lemma}

\title{Pattern Formation for Fat Robots with Memory}
\author{Rusul J. Alsaedi, Joachim Gudmundsson, André van Renssen}
\date{}

\begin{document}

\maketitle

\begin{abstract}
Given a set of $n\geq 1$ autonomous, anonymous, indistinguishable, silent, and possibly disoriented mobile unit disk (i.e., fat) robots operating following Look-Compute-Move cycles in the Euclidean plane, we consider the Pattern Formation problem: from arbitrary starting positions, the robots must reposition themselves to form a given target pattern. This problem arises under obstructed visibility, where a robot cannot see another robot if there is a third robot on the straight line segment between the two robots. We assume that a robot's movement cannot be interrupted by an adversary and that robots have a small $O(1)$-sized memory that they can use to store information, but that cannot be communicated to the other robots. To solve this problem, we present an algorithm that works in three steps. First it establishes mutual visibility, then it elects one robot to be the leader, and finally it forms the required pattern. The whole algorithm runs in $O(n) + O(q \log n)$ rounds with probability at least $1 - n^{-q}$. The algorithms are collision-free and do not require the knowledge of the number of robots.
\end{abstract}

\textbf{Keywords}
\label{Keywords}
{Pattern formation \and Fat robots \and Obstructed visibility \and Collision avoidance}.

\section{Introduction}
This paper considers a system of unit disk robots in the two-dimensional Euclidean plane and studies the problem of forming a pattern using these robots. Consider a set of $n\geq 1$ mobile robots working under the classical oblivious robots model \cite{Flocchini2012} which are autonomous (no external control), anonymous (no unique identifiers), indistinguishable (no external markers), with some history (limited memory of the past), silent (no means of direct communication), and possibly disoriented (no agreement on their coordination systems). All robots execute the same algorithm. They operate following Look-Compute-Move (LCM) cycles \cite{DAS2016171} and work toward achieving a common goal, i.e., when a robot becomes active, it uses its vision to get a snapshot of its surroundings (Look), computes a destination point based on the snapshot (Compute), and finally moves towards the computed destination (Move).

There are three types of schedulers for the activation of the robots in the plane: fully synchronous,  semi-synchronous, and asynchronous. In the fully synchronous model, time is discrete and at each time instant $t$ all robots in the plane are activated and perform their LCM operations instantaneously, ending at time $t+1$. The semi-synchronous is similar, except that a subset of robots is activated at each time. Therefore, we use \emph{round} $t$ in fully synchronous and semi-synchronous instead of time $t$. In the asynchronous setting, each robot acts independently from the others, and the duration of the LCM operations of each robot is finite but unpredictable, i.e., there is no common notion of time.

This classical robot model has a long history of research and has many applications including coverage, exploration, intruder detection, data delivery, and symmetry breaking \cite{cieliebak2012distributed}. Unfortunately, most of the previous work considered the robots in this model to be dimensionless point robots which do not occupy any space.

The classical model also makes an important assumption of unobstructed visibility: any three collinear (point) robots are mutually visible to each other. However, this assumption may not reflect reality, since robots are not dimensionless points and hence they may block the view of other collinear robots. Therefore, the capabilities of robots under obstructed visibility has been the subject of recent research \cite{Agathangelou2013,Bolla2012,Cohen2008,Cord-Landwehr2011,Czyzowicz2009,di2017mutual,Luna2014,Luna2014b,Dutta2012,Flocchini2015,Chaudhuri15,Sharma2015b,Sharma2015,Sharma2015c,Vaidyanathan2015}. Under obstructed visibility, robot $r_i$ can see robot $r_j$ if and only if there is no third robot on the straight line segment connecting $r_i$ and $r_j$.

Recent research under obstructed visibility focuses on the Pattern Formation problem: Starting from arbitrary (distinct) positions in the plane, determine a schedule to reposition the robots such that they form a given target pattern without any two robots colliding during the execution of the algorithm \cite{bose2020arbitrary,bose2021arbitrary,vaidyanathan2022fast}.
We say that two robots collide if at any point in time they share the same position. The target pattern is allowed to be scaled, rotated, translated, and reflected in order to be successfully built by the robots.

As far as we are aware, there currently exist no algorithms to solve the Pattern Formation problem in the classical model for fat robots with limited memory. Our algorithm works in three phases. First it establishes mutual visibility, then it elects one robot to be the leader, and finally it forms the required pattern. During the execution, the robots do not need to agree on the orientation of any axes and the robots do not initially need to know how many robots there are. The algorithm requires in $O(n) + O(q \log n)$ rounds with probability at least $1 - n^{-q}$ in the fully synchronous model under obstructed visibility and is collision-free. 

\subsection{Related Work}
A variation of the classical model described above, called the luminous robots model (or the robots with lights model), has become the focus of significant interest \cite{bose2020arbitrary,bose2021arbitrary,di2017mutual,Luna2014,Luna2014b,Peleg2005,Sharma2015b,Sharma2015,Vaidyanathan2015,vaidyanathan2022fast}. In this model, robots are equipped with an externally visible light which can assume different colors from a fixed set of colors. The lights are persistent, i.e., the color of the light is not erased at the end of the LCM cycle. Except for the assumption of the availability of lights, the robots work similarly to the classical model. This model corresponds to the classical oblivious robots model when the number of colors in the set is 1 \cite{di2017mutual,Flocchini2012}. In this model, the objective is to solve the Pattern Formation problem while minimizing the size of the color set.

Most of the existing work considered dimensionless point robots to solve the Pattern Formation problem \cite{bose2022distributed,bose2020pattern,fujinaga2015pattern,chaudhuri2014pattern,ghosh2022move,kundu2022arbitrary,pattanayak2020distributed}, and most of them considered the above lights model while solving the problem \cite{bose2021arbitrary,vaidyanathan2022fast}. However, the techniques used in those papers do not generalize to fat robots and cannot be used to solve the problem in the classical model. Most closely related to our work is the paper by Bose {\it et al.} \cite{bose2020arbitrary}, where they studied the Pattern Formation problem for fat robots under the asynchronous scheduler. In their solution, using the robots with lights model, all the robots need to agree on one axis. They solved the problem using ten colors. The required number of colors was recently improved by Alsaedi {\it et al.}~\cite{alsaedi2023pattern}, who gave algorithms that required 7 or 8 colors, depending on whether the target pattern was allowed to be scaled or not. 

Other recent work \cite{bose2021arbitrary,vaidyanathan2022fast} solved the Pattern Formation problem in the robots with lights model assuming that the robots are dimensionless points. These papers provided techniques to overcome obstructed visibility, but did not take the physical extents of the robots into account. Unfortunately, the techniques developed for point robots cannot be applied directly to solve Pattern Formation for fat robots for this reason. 

Kundu {\it et al.} \cite{kundu2022arbitrary} studied the Pattern Formation problem for fat robots with lights on an infinite grid with one axis agreement of robots. They solved the problem using nine colors. However, they did not bound the running time of their algorithm.

Other results for the Pattern Formation problem include those by Cicerone {\it et al.} \cite{cicerone2021solving}, who considered point robots with chirality (robots have to agree on a notion of left and right), and those by Flocchini {\it et al.} \cite{flocchini2008arbitrary}, who studied the problem for point robots in the asynchronous setting, requiring that the robots agree on their environment and observe the positions of the other robots.

\section{Preliminaries}
Consider a set of $n\geq 1$ anonymous robots $R=\{r_1,r_2,\ldots,r_n\}$ operating in the Euclidean plane $\mathbb{R}^2$. The number of robots is not assumed to be known to the robots.
We assume that each robot $r_i\in R$ is a non-transparent disk with diameter 1. The center of the robot $r_i$ is denoted $c_i$, and the position of $c_i$ is also said to be the position of $r_i$. We denote by $\dist(r_i,r_j)$ the Euclidean distance between the two robots $c_i$ to $c_j$. For simplicity, we use $r_i$ to denote both the robot $r_i$ and the position of its center $c_i$. 
Each robot $r_i$ has its own coordinate system, and it knows its position with respect to this coordinate system.
Note that robots do not necessarily agree on the orientation of their coordinate systems. However, since all the robots are of unit size, they implicitly agree on the notion of unit length. 

Each robot has a camera to take a snapshot of the plane and the distance that is visible to this camera is infinite, provided that there are no obstacles blocking its view (i.e., another robot) \cite{Agathangelou2013}. 
Following the fat robot model \cite{Agathangelou2013,Czyzowicz2009},
we assume that a robot $r_i$ can see another robot $r_j$ if there exists a point on the bounding circle of $r_j$ that is visible to some point on the bounding circle of $r_i$.
Similarly, we say that a point $p$ in the plane is visible to a robot $r_i$ if there is a point $p_i$ on the bounding circle of $r_i$ such that the straight line segment $\overline{p_i p}$ does not intersect any other robot. 

Each robot is equipped with a small amount of memory. We assume that a single unit of memory suffices to store an integer of value at most $n$. In other words, a single unit of memory can store $O(\log n)$ bits. We also assume that the locations of the robots can be stored efficiently, using a constant number of units of memory, or a constant amount of memory for short. As all previous work assumes that the robots can perform computations based on the observed (locations of) robots, the only difference between this model and the existing model is that we allow a small number of things to be stored between rounds. Crucial is that the robots have no way to communicate with each other (i.e., they can send no messages and have no lights to indicate their internal state) and thus they cannot combine what is stored in their respective memories. 

At any time $t$, a robot $r_i\in R$ is either active or inactive. When active, $r_i$ performs a sequence of {\em Look-Compute-Move} (LCM) operations:
\begin{itemize}
\item {\em Look:} the robot takes a snapshot of the positions of the robots visible to it in its own coordinate system; 
\item {\em Compute:} executes its algorithm using the snapshot which returns a destination point $x\in \mathbb{R}^2$, potentially using the information stored in its memory and updating what is stored in its memory; and
\item {\em Move:} moves to the computed destination $x\in \mathbb{R}^2$ (if $x$ is different from its current position).
\end{itemize}

Each robot executes the same algorithm locally every time it is activated and a robot's movement cannot be interrupted by an adversary. Two robots $r_i$ and $r_j$ are said to {\em collide} at time $t$ if the bounding circles of $r_i$ and $r_j$ share a common point at time $t$. To avoid collisions among robots, we thus have to ensure that at all times $\dist(r_i,r_j)\geq 1$ for any robots $r_i$ and $r_j$.

The Pattern Formation problem is now defined as follows: Given any initial positions of the robots, the robots must reposition themselves to form a given target pattern without having any collisions in the process, and then terminate in this configuration. The target pattern is given as input as a list of locations, one for each robot in the pattern and allowed to be scaled, rotated, translated, and reflected. An algorithm is said to solve the Pattern Formation problem if it always achieves the target pattern from any initial configuration. The pattern should be constructed if this is indeed possible. We measure the quality of the algorithm using the amount of memory required to execute it and the number of rounds needed to solve the Pattern Formation problem.

\section{Mutual Visibility}
Our algorithm works in three phases. The first phase is the Mutual Visibility phase. The goal of this phase is to move the robots such that at the end of the phase every robot can see all other $n-1$ robots. The algorithm we present forms a convex hull with each robot positioned on a corner of the convex hull (see Figure~{\ref{fig:4.4}} for an example). We will argue later that our approach runs in $O(n)$ rounds and uses $O(1)$ memory space.

\begin{figure}[ht]
 \centering
  \includegraphics{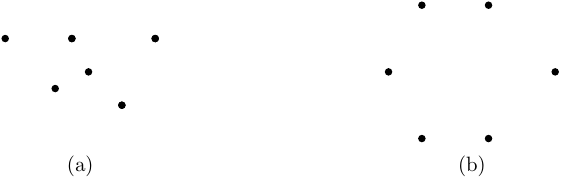}
  \caption{An example of the Mutual Visibility phase: (a) an initial configuration and (b) the end configuration of this phase. Throughout the paper the robots are shown as dots for simplicity.}
  \label{fig:4.4}
\end{figure}

\subsection{Algorithm}
On a high level, our Mutual Visibility algorithm is similar to the algorithm by Alsaedi {\it et al.} \cite{https://doi.org/10.48550/arxiv.2206.14423}. However, our algorithm works in the classical robots model (i.e., without the lights required by Alsaedi {\it et al.}). Their algorithm used two colors to distinguish (and communicate) which robots are corners of the convex hull and which robots are in the interior. Their algorithm proceeds to gradually expand the convex hull by moving the corner robots one unit away from the interior of the convex hull using the angle bisectors of consecutive robots on the hull. This ensures that the edges of the hull will become long enough to let all interior robots move through them to the outside of the convex hull to become new corners without collisions. Alsaedi {\it et al.} \cite{https://doi.org/10.48550/arxiv.2206.14423} showed that this process finishes in $O(n)$ rounds. 

We first observe that robots can easily determine whether they themselves are interior or corner robots. Corner robots are those robots on the vertices of the convex hull with interior angles less than $180^\circ$. There could initially be robots with interior angle equal to $180^\circ$, but since all the corner robots move in every round to expand the convex hull these robots will become interior robots almost immediately. 

Let us look a bit more closely at how Alsaedi {\it et al.} \cite{https://doi.org/10.48550/arxiv.2206.14423} moved the interior robots. They define \emph{eligible edges} for each robot as the set of convex hull edges of length at least $3$ for which no other robot is closer to the edge. If an interior robot has eligible edges, it picks the closest one and moves perpendicular to this edge through it to become a new corner of the convex hull. In every round, there are at most two interior robots that move through any single edge of the convex hull. By construction, the movements of the robots through different edges do not interfere each other. Therefore, there is no collision among interior robots. Since all the corner robots move in every round, the edges of the hull are long enough to move the interior robots through without any collisions. Therefore, there is no collision among all robots. Since a robot can determine whether it is interior or not, and all other computations for this process are based on the locations of the other robots it sees, this process does not require the colors and can be mimicked in our algorithm. 

Hence, it remains to argue that the robots can correctly determine when this phase has ended. Since there are no lights to signal this (all robots having the corner robots light does this in the original algorithm) and the robots do not know $n$, it is difficult for the robots to know whether the phase has ended even if the robots have a little memory. Specifically, there are cases where from a corner robot's perspective all other visible robots are in convex position. For example, in Figure~{\ref{fig:4.1}} robot $r_i$ cannot see robot $r_k$ since robot $r_j$ blocks its view to that robot. However, since all robots that are visible to $r_i$ are in convex position, it can now incorrectly conclude that it can see all robots. 

\begin{figure}[ht]
 \centering
  \includegraphics{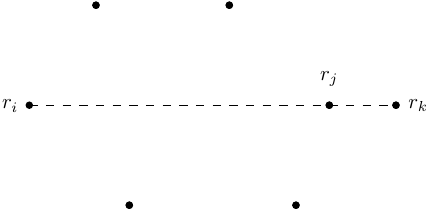}
  \caption{An example configuration where a corner robot $r_i$ believes all robots are in convex position when there is an interior robot $r_j$ left which blocks visibility to robot $r_k$.}
  \label{fig:4.1}
\end{figure}

Fortunately, the above problem can only occur for a single robot in the configuration. All other robots can see robot $r_j$ as the interior robot it is. In the next section we will prove this claim. Now, what we need to avoid is robot $r_i$ incorrectly and prematurely ending the Mutual Visibility phase for itself, as we cannot correct this easily due to the lack of communication between robots. Hence, when a corner robot believes the phase is over because it observed that all other visible robots are in convex position, it starts a counter with value $4k+2$, where $k$ is the number of visible corner robots. It stores both the value $k$ and the counter in its memory. Every round, it continues expanding the convex hull as normal (i.e., it continues to behave the same way as if not all robots are in convex position yet), but it also decrements the counter. This is to ensure that if at any point it realizes that it made a mistake, we do not have to deal with correcting the situation or computing where it should have moved due to a mistake a certain number of rounds ago. If it sees more than $k$ corner robots within these $4k+2$ rounds, it stops its current counter and concludes that it was previously incorrect. Because the robot acted as normal, it is already in the exact same location as it would have been if it had not made this mistake and thus the algorithm can proceed as normal (and if it again observes all visible robots in convex position, it will start a new counter using the new higher value of $k$). In the next section, we will argue that this process ensures that every corner robot correctly concludes that the Mutual Visibility phase has ended and that they all conclude this in the same round. As a result, the Mutual Visibility problem has been solved. All robots store this in their memory. 

\subsection{Analysis}
We proceed to prove that our algorithm solves the Mutual Visibility problem in a linear number of rounds using a constant amount of memory, while avoiding collisions between the robots.

Since our algorithm is similar to the algorithm by Alsaedi {\it et al.} \cite{https://doi.org/10.48550/arxiv.2206.14423}, the main parts of the proof can be found there. We therefore focus on the differences between the two algorithms.

We start by recalling that contrary to the original algorithm, robots cannot always determine whether they are in convex position. Specifically, if multiple robots are collinear, some might observe all others to be in convex position, while globally this is not yet the case (see Figure~{\ref{fig:4.1}}). We first argue that there can be at most one robot that incorrectly concludes that all robots are in convex position.

\begin{lemma}
\label{lemma:4.0}
If there exists an interior robot, then there exists at most one corner robot that believes all robots are in convex position.
\end{lemma}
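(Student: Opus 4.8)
The plan is to argue by contradiction. Suppose there is an interior robot $r_j$ and suppose two distinct corner robots, say $r_i$ and $r_{i'}$, both believe all robots are in a convex position. A corner robot $r_a$ believes all robots are in convex position precisely when every robot visible to $r_a$ lies on the convex hull of the robots visible to $r_a$, i.e., $r_a$ sees no robot in the strict interior of that hull. Since $r_j$ is an interior robot, it lies strictly inside the global convex hull $H$ of all $n$ robot positions; the only way $r_i$ can fail to notice this is if $r_j$ is occluded from $r_i$ by some other robot lying on the segment from $r_i$ to $r_j$. I would first make precise the geometric fact I need: if a robot $r_a$ is a corner of $H$ and $r_a$ does not see interior robot $r_j$, then there is a third robot $r_b$ on the open segment $\overline{r_a r_j}$; and moreover $r_b$ must itself be an interior robot of $H$ (a corner of $H$ cannot lie in the interior of a chord of $H$ unless it coincides with an endpoint, using that the robots are in convex position after the first round, i.e., no three corners are collinear). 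This step leans on the structure established by the expansion process of Alsaedi \emph{et al.}, which I am entitled to assume.

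**The core counting/geometry argument.**

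The heart of the argument is: at most one corner robot can be "blind" to the entire set of interior robots. I would phrase it as follows. Let $I$ be the (nonempty) set of interior robots and let $C$ be the set of corner robots. A corner robot $r_a$ believes everyone is convex iff for every $r_j \in I$, the segment $\overline{r_a r_j}$ is blocked by some other robot. I claim this forces $r_a$ to be "behind" all of $I$ in a strong sense. Consider the interior robot(s) and the chords of $H$ they may occlude. Here is the key observation to develop: if corner $r_a$ is blocked from seeing interior robot $r_j$ by robot $r_b$ on $\overline{r_a r_j}$, then $r_b \in I$ as well, and $r_b$ is strictly closer to $r_a$ along that ray. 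Iterating, the segment from $r_a$ toward the "innermost" interior robot in its direction is packed with interior robots, all collinear with $r_a$. Now take \emph{two} such corners $r_i$ and $r_{i'}$: each has a maximal collinear chain of interior robots pointing at it. I would argue these two chains must share the same set of interior robots (otherwise some interior robot visible from one corner's side would betray its interior status to that corner), and then two distinct corner robots both being collinear-extensions of one common chain of interior points forces $r_i$, $r_{i'}$, and all the intervening interior robots to be collinear — but then the line through $r_i$ and $r_{i'}$ is a chord of $H$ containing two distinct corners in its interior region with interior robots strictly between them, which is impossible once the hull is in strictly convex position (no interior robot can lie on the segment between two hull corners, since that segment is an edge or a chord of $H$ and interior robots are strictly inside). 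This contradiction establishes the lemma.

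**Anticipated main obstacle.**

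The delicate part is the step asserting that the two "blind chains" for $r_i$ and $r_{i'}$ must coincide, and extracting from that the forced collinearity of $r_i$ and $r_{i'}$. It is tempting but not immediate: a priori $r_i$ could be blinded by one interior robot $r_j$ while $r_{i'}$ is blinded by a different interior robot $r_{j'}$, with neither chain referencing the other's blocker. I would handle this by a visibility-transfer argument: if $r_{i'}$ does \emph{not} lie on the line through $r_i$ and $r_j$, then from $r_{i'}$'s vantage point the robot $r_j$ is not collinear with the robot $r_b$ blocking it from $r_i$, so $r_{i'}$ has an unobstructed (or differently obstructed) line of sight, and a careful case analysis — using that $r_j$ is strictly inside $H$ and $r_{i'}$ is a corner — shows $r_{i'}$ must see \emph{some} interior robot directly. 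The cleanest route is probably: fix the interior robot $r_j$; the set of corners that fail to see $r_j$ all lie on lines through $r_j$ that are blocked, and one shows these corners, together with $r_j$, are forced onto a single line, so there is at most one such corner on each side of $r_j$ — and then a short argument rules out having one on each side simultaneously while both are also blind to every \emph{other} interior robot. I expect roughly one page of careful case work here, and this is where I would spend the bulk of the effort; everything else is bookkeeping about convex position and occlusion.
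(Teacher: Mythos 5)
There is a genuine gap, and it begins with your very first reduction. You assert that the only way a corner robot $r_i$ can fail to notice the interior robot $r_j$ is if $r_j$ is occluded from $r_i$. That is not the failure mode this lemma is about. In the problematic configuration (Figure~\ref{fig:4.1}, and case (c) of Figure~\ref{fig:4.7}), $r_i$ \emph{does} see $r_j$: the interior robot $r_j$ lies on the segment between $r_i$ and another corner $r_k$, so it is the corner $r_k$ that is hidden from $r_i$, and as a consequence $r_j$ appears as a \emph{vertex} of the convex hull of the robots visible to $r_i$. ``Interior to the global hull'' and ``interior to the hull of the visible robots'' are different notions, and your characterization conflates them. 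Your chain construction then compounds the problem: the nearest blocker on the ray from $r_i$ toward $r_j$ is an interior robot that $r_i$ certainly sees, so under your own characterization $r_i$ could never be fooled at all --- which would prove that \emph{zero} corners can be fooled and contradicts the configurations the paper explicitly exhibits.

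The contradiction you aim for at the end is also not a contradiction. You conclude that having $r_i$, $r_{i'}$, and intervening interior robots collinear is impossible because ``no interior robot can lie on the segment between two hull corners.'' For two \emph{non-adjacent} corners, the relative interior of that chord is contained in the interior of the hull, so interior robots can perfectly well sit on it --- indeed, that is exactly the configuration the lemma must handle. The paper's argument takes the opposite tack: it embraces the collinear triple (corner, interior robot, corner), notes that every non-extreme robot on such a line detects the collinearity directly, and then observes that the two extreme corners see the blocking interior robot on opposite sides of the relevant chord (``behind'' for one, ``in front'' for the other), so at most one of them can be fooled. Your proposed ``visibility-transfer'' step for merging the two chains is, by your own admission, only sketched, but the two issues above would have to be repaired before that step even becomes the bottleneck.
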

\begin{proof}
We observe that for a robot not to be visible to a corner robot, three robots have to be collinear. This implies that all robots except the two extreme ones on this line can conclude that they are not in convex position. 

Now consider such an extreme robot. If there are one or more interior robots that block the view of one corner robot, this corner robot observes the position of at least one interior robot by using the line segments connecting each pair of corner robots (see Figure~\ref{fig:4.7}). 

\begin{figure}[ht]
 \centering
  \includegraphics[scale=0.8]{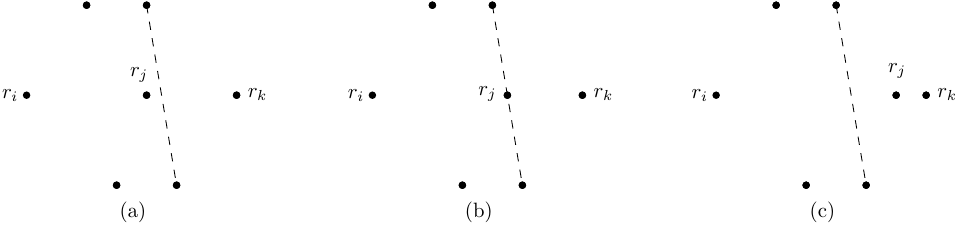}
  \caption{Robot $r_j$ blocks visibility between $r_i$ and $r_k$: (a) $r_i$ concludes that $r_j$ is an interior robot, (b) $r_i$ concludes that the robots are not yet in convex position, and (c) $r_i$ incorrectly concludes that the robots are in convex position, but $r_k$ still correctly concludes that this is not the case.}
  \label{fig:4.7}
\end{figure}

If an interior robot $r_j$ lies before a line segment between a pair of corner robots (see Figure~\ref{fig:4.7}a), the corner robot $r_i$ observes that they are not in convex position. If the interior robot $r_j$ lies on the line segment between a pair of corner robots (see Figure~\ref{fig:4.7}b), the corner robot $r_i$ observes that there are three collinear robots and thus the robots are not yet in convex position. If an interior robot lies behind the line segment between a pair of corner robots (see Figure~\ref{fig:4.7}c), corner robot $r_i$ may believe the robots are in convex position because the interior robot blocks the view to the corner robot $r_k$ that lies behind it. Hence, from $r_i$'s perspective, all robots are in convex position. However, note that from $r_k$'s perspective $r_j$ lies before this same line segment and thus $r_k$ correctly concludes that the robots are not in convex position yet. Hence, only one robot can incorrectly conclude that all robots are in convex position and the lemma follows.
\end{proof}

Next, we argue that when a robot starts a counter, within $4k+2$ rounds either it sees more corner robots or it can correctly conclude that mutual visibility has been achieved.

\begin{lemma}
\label{lemma:4.1}
If a robot $r$ starts a counter with value $4k+2$ (where $k$ is the number of currently visible corner robots), then within that number of rounds either it sees more than $k$ corner robots or mutual visibility has been achieved, and $r$ terminates this phase.
\end{lemma}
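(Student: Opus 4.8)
The plan is to track what the counter's owner $r$ observes over the $4k+2$ rounds and argue a dichotomy: either the set of corner robots visible to $r$ strictly grows (in which case $r$ stops the counter, as described in the algorithm), or it does not, in which case I must show that mutual visibility has genuinely been achieved by round $4k+2$ and $r$ is correct to terminate. The first branch is immediate from the algorithm's description, so the real content is the second branch: assuming $r$ never sees more than $k$ corner robots during the window, show that in fact all $n$ robots are on the convex hull and mutually visible by the time the counter expires.

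To do this I would argue by contradiction: suppose that when $r$ starts its counter there is still at least one interior robot. By Lemma~\ref{lemma:4.0}, $r$ is the unique robot that can be fooled into believing everyone is in convex position, so every other corner robot correctly sees that there is an interior robot and keeps expanding the hull. I then invoke the analysis of Alsaedi \etal~\cite{https://doi.org/10.48550/arxiv.2206.14423}, which establishes that while interior robots remain, the corner robots keep moving outward every round and the hull edges keep lengthening, so interior robots keep being pushed through edges to become new corners at a guaranteed rate. The key quantitative step is to bound how many additional rounds can pass before all remaining interior robots have become corners and are visible to $r$: I want to show this is at most $4k+2$ rounds from the moment $r$ starts its counter. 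The natural way to get this bound is to observe that when $r$ has $k$ visible corners, there are at most some function of $k$ interior robots still hidden (here the bound comes from the fact that each hidden interior robot must block $r$'s view of a distinct corner behind it, and by the \ref{fig:4.7}(c) analysis each such blocking happens along a segment between two of $r$'s $k$ visible corners, so there are $O(k)$ hidden robots), and then to use the rate at which interior robots get absorbed — at least one or two per relevant edge per round — together with the observation that as soon as any previously hidden interior robot becomes a corner, $r$ sees it and its count exceeds $k$. So really the window only needs to be long enough for the \emph{first} newly hidden-to-visible transition, which the Alsaedi \etal\ bound makes $O(k)$; choosing the constant $4k+2$ absorbs the worst case of edge-length growth needed before a blocked robot can pass through.

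Putting the two branches together: if during the $4k+2$ rounds $r$ ever sees more than $k$ corners, we are in the first case and $r$ (correctly) discards the counter; otherwise, the contradiction argument shows no interior robot could have survived the whole window, so by round $4k+2$ every robot is a corner of the convex hull, hence mutually visible, and $r$'s conclusion to terminate the phase is correct. I would close by remarking that $r$ stores only the integer $k$ and the counter, each $O(\log n)$ bits, so this uses $O(1)$ memory, consistent with the claimed bounds.

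The main obstacle I expect is pinning down the constant in the counter, i.e.\ proving rigorously that $4k+2$ rounds suffice. This requires combining (i) an upper bound of $O(k)$ on the number of interior robots that can remain hidden from $r$ when $r$ sees exactly $k$ corners, and (ii) a lower bound on the per-round progress (edge lengthening and interior-robot absorption) from the Alsaedi \etal\ analysis, and then checking the arithmetic closes with the specific constant $4k+2$ rather than, say, $ck$ for some unspecified $c$. A secondary subtlety is making sure the argument is robust to $r$ restarting a fresh counter with a larger $k$ after a correction — but that follows because each restart strictly increases $k$, which is bounded by $n$, so only finitely many restarts occur and the last one terminates correctly by the same dichotomy.
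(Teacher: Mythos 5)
Your proposal follows essentially the same route as the paper: the first branch is immediate from the algorithm, and the second reduces to showing that while an interior robot survives, the hull-expansion analysis of Alsaedi \etal\ forces at least one new corner to appear (and hence $r$'s count to exceed $k$) within the window, so only the first hidden-to-visible transition matters. The constant you flag as the main obstacle is resolved in the paper simply by citing Lemma~8 of that reference ($4k$ rounds to lengthen the edges sufficiently, plus $2$ rounds for an interior robot to pass through and become a corner), so your extra counting of how many interior robots can remain hidden is not needed.
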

\begin{proof}
Lemma $8$ in \cite{https://doi.org/10.48550/arxiv.2206.14423} shows that it takes at most $4k$ rounds to make enough space for the interior robots to move outside the current convex hull. In addition to that, it takes two more rounds to move at least one interior robot outside the current convex hull. Therefore, if there is an interior robot inside the hull, it takes at most $4k+2$ rounds for a new corner robot to appear. Since all other corner robots remain visible, this implies the first part of the lemma. Otherwise, if there is no interior robot inside the hull, then mutual visibility has been achieved, and $r$ correctly terminates this phase.
\end{proof}

Now that we know that at some point the robots conclude that the phase has concluded, we proceed to show that all robots terminate in the same round.

\begin{lemma}
\label{lemma:4.2}
All robots terminate the Mutual Visibility phase in the same round.
\end{lemma}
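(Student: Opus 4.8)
The plan is to show that the "global" end of the Mutual Visibility phase happens in a single well-defined round, and that every robot detects it in exactly that round. Let $t^\*$ be the first round in which all $n$ robots are in convex position globally (such a round exists by the $O(n)$-round bound for the underlying expansion process from Alsaedi \textit{et al.}~\cite{https://doi.org/10.48550/arxiv.2206.14423}, combined with Lemma~\ref{lemma:4.1}). I would split the argument into two directions: (i) in round $t^\*$ every robot correctly concludes the phase has ended, and (ii) no robot concludes this in any round before $t^\*$.

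For direction (i): at round $t^\*$ there is no interior robot, so by Lemma~\ref{lemma:4.0}'s reasoning, every robot sees every other robot and observes all $n$ robots in convex position. Each robot therefore starts (or has running) a counter. The subtle point is that a robot with a counter only \emph{terminates} the phase when the counter reaches zero, not when it first sees everyone in convex position. So I must argue that the counter each robot is running when it reaches $t^\*$ will actually expire without being cancelled. By Lemma~\ref{lemma:4.1}, once a robot's counter is running and it has not seen a new corner robot, mutual visibility has been achieved by the time the counter hits zero; conversely, if it ever sees a new corner it resets. At $t^\*$ no new corners ever appear again (the set of robots is fixed and they are all already corners and never re-enter the hull's interior in subsequent rounds, since the expansion only pushes corners outward). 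Hence every robot's final counter, once started at or after the last moment a new corner appeared, runs to completion; the robot then declares the phase over.

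The remaining work — and the main obstacle — is to pin down the exact common round. Different robots may start their "final" counters at different rounds and with different values of $k$ (their locally visible corner count), so a priori they could expire at different times. The key observation I would exploit is that the counter value $4k+2$ is chosen precisely so that a counter started in round $t$ with value $4k+2$ cannot expire before round $t^\*$ unless it is reset: Lemma~\ref{lemma:4.1} guarantees that if an interior robot still exists when the counter starts, a new corner appears within $4k+2$ rounds, forcing a reset. Therefore a counter expires (without reset) only when it was started at the round immediately following the last appearance of a new corner robot — and that round is the same for all robots, because "a new corner robot appears" is a global event observed by all robots (each new corner is a robot that moved from the interior through a hull edge, and by Lemma~\ref{lemma:4.0} at most one robot can be confused about convex position, and even that robot still sees every \emph{corner} robot). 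Call that round $t_0$, let $k_{\max}$ be the number of corners (i.e.\ all $n$ robots) from $t_0$ onward; then every robot that had not already started its final counter starts it in round $t_0$ with the same value $4k_{\max}+2$, and any robot that started earlier has been reset and restarts at $t_0$ as well. Thus all counters expire simultaneously in round $t_0 + 4k_{\max}+2 = t^\*$, and all robots terminate the phase in that round.

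I would close by noting the one edge case to handle explicitly: the single robot permitted by Lemma~\ref{lemma:4.0} to misjudge convex position early. This robot may start a premature counter, but since it still sees all corner robots, it will witness the next new corner within its counter window and reset, exactly as in the argument above; hence it does not terminate early and rejoins the synchronized behavior. This leaves no robot terminating before $t^\*$ and every robot terminating in $t^\*$, proving the lemma.
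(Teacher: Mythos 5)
Your proposal is correct and follows essentially the same route as the paper: the single possibly-confused robot's premature counter is cancelled because Lemma~\ref{lemma:4.1} forces it to see a new corner within its window, and once the last interior robot becomes a corner all robots observe this in the same round, start counters with the same value $4n+2$, and hence expire and terminate together. The only blemish is notational: your closing equation $t_0 + 4k_{\max}+2 = t^*$ contradicts your own definition of $t^*$ as the first round of global convex position (that round is $t_0$, and the common termination round is $t_0 + 4k_{\max}+2$), but this does not affect the substance of the argument.
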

\begin{proof}
Since by Lemma~\ref{lemma:4.0}, only one corner robot can incorrectly believe that all robots are in convex position, only this robot can start a counter with value $4k+2$ by Lemma~\ref{lemma:4.1}. If there is an interior robot, by Lemma \ref{lemma:4.1} it will conclude this within $4k+2$ rounds.  

Once there are no interior robots, every corner robot starts a counter with value $4n+2$ when the last corner robot joins the convex hull, i.e., in the same round. Therefore, all robots terminate the Mutual Visibility phase in the same round.
\end{proof}

It remains to analyze the time complexity of the Mutual Visibility phase.

\begin{lemma}
\label{lemma:4.3}
The Mutual Visibility phase takes $O(n)$ rounds.
\end{lemma}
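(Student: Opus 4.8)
The plan is to obtain the bound by adding the running time of the underlying convex-hull expansion procedure to the overhead caused by the counters, and to show that this overhead is additive rather than multiplicative in $n$.

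First I would note that in every round our algorithm performs exactly the same movements as the algorithm of Alsaedi {\it et al.}~\cite{https://doi.org/10.48550/arxiv.2206.14423}: each robot can locally decide whether it is a corner or an interior robot, all destinations depend only on observed positions, and the information a robot stores for its counter does not influence where it moves (a robot running a counter still acts as normal). Hence their analysis applies verbatim and, after $O(n)$ rounds, the configuration is such that every robot is a corner of the convex hull and no interior robot remains; call this round $T = O(n)$. In particular, active counters do not slow down the expansion.

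Next I would rule out any robot leaving the phase before round $T$. By Lemma~\ref{lemma:4.0}, while an interior robot exists at most one robot ever starts a counter, and by Lemma~\ref{lemma:4.1} such a robot, within the $4k+2$ rounds of that counter, either observes more than $k$ corner robots (and abandons the counter, continuing as normal) or mutual visibility has genuinely been achieved. Since mutual visibility is not achieved while an interior robot exists, this robot is always forced into the first case and does not terminate before round $T$.

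Finally I would bound the time from round $T$ onward. Once no interior robot remains, every robot sees all $n$ robots in convex position, so by Lemma~\ref{lemma:4.2} every robot (re)starts a counter with value $4n+2$ in round $T$; during the following $4n+2$ rounds no robot can ever observe more than $n$ corner robots, so no counter is abandoned and every robot terminates the phase in round $T + 4n + 2 = O(n)$. The one point that needs care is the worry that the single ``mistaken'' robot of Lemma~\ref{lemma:4.0} might start and abandon counters many times, each of length up to $\Theta(n)$, accumulating $\omega(n)$ rounds; this is ruled out because abandoning a counter never interrupts that robot's normal motion (so the expansion still finishes by round $T$), and each abandonment forces it to have seen strictly more corner robots — which can happen only while new corners are still appearing, i.e., before round $T$ — after which the single common counter completes the phase in $O(n)$ additional rounds.
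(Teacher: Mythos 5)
Your proof is correct and follows essentially the same decomposition as the paper's: the underlying expansion of Alsaedi \emph{et al.} accounts for $O(n)$ rounds, and the counter mechanism contributes only an additive $4n+2 = O(n)$ overhead via Lemmas~\ref{lemma:4.1} and~\ref{lemma:4.2}. Your additional care in ruling out $\omega(n)$ rounds from repeatedly started-and-abandoned counters (because counters never alter the robot's motion, so the expansion still finishes on schedule) is a detail the paper's terser proof leaves implicit, but it is the same argument at heart.
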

\begin{proof}
This lemma follows from Lemma $8$ in \cite{https://doi.org/10.48550/arxiv.2206.14423}, which states that the original algorithm takes $O(n)$ rounds, and Lemma~\ref{lemma:4.1}, which adds an additional $4n+2$ (i.e., $O(n)$) rounds for all robots to correctly conclude that they are in convex position.
\end{proof}

\begin{lemma}
\label{lemma:4}
The Mutual Visibility phase uses $O(1)$ memory.
\end{lemma}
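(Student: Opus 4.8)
The plan is to track exactly what each robot must retain between rounds and bound it by a constant number of $O(\log n)$-sized memory units. First I would observe that during the core expansion process of Alsaedi \textit{et al.}~\cite{https://doi.org/10.48550/arxiv.2206.14423}, every destination a robot computes is a function only of the snapshot it currently sees (the angle bisectors of consecutive hull neighbours for corner robots, or the closest eligible edge for interior robots). Hence the underlying expansion subroutine is itself memoryless and contributes nothing to the memory bound. The only new state introduced by our modification is: (i) a flag recording whether the robot is currently running a counter, (ii) the stored value $k$ (the number of visible corner robots when the counter was started), and (iii) the current counter value, which starts at $4k+2$ and is decremented each round. I would also note the final one-bit flag recording that the Mutual Visibility phase has terminated.

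Next I would bound each of these. The flag in (i) and the termination flag are single bits, hence $O(1)$ memory. For (ii) and (iii), the key point is that $k \le n$ always, since $k$ counts visible corner robots, so both $k$ and the counter value $4k+2 \le 4n+2$ fit in a single unit of memory by the paper's assumption that one unit stores an integer of size $n$ (up to the constant factor $4$ and additive $2$, which only costs $O(1)$ extra bits, i.e.\ still a constant number of units). When a robot sees more than $k$ corner robots within the counting window, it discards the old $k$ and counter and possibly starts a fresh counter with a new, larger value of $k$; at no point does it need to retain more than one $(k, \text{counter})$ pair, because the restated behaviour overwrites the previous values rather than accumulating them. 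Therefore the total is a constant number of memory units, i.e., $O(1)$ memory.

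I would then conclude by combining these observations: the expansion subroutine uses no persistent memory, and the termination-detection mechanism uses a constant number of counters and flags each of size $O(\log n)$ bits, so the Mutual Visibility phase uses $O(1)$ memory overall.

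The main obstacle, such as it is, is making sure nothing is being silently stored that would blow the bound — in particular, one must check that a robot never needs to remember its own past positions or a history of snapshots in order to recover from a premature termination. The resolution is precisely the design choice emphasised in the algorithm description: a robot that starts a counter keeps behaving exactly as it would have without starting it, so upon detecting a mistake it is already in the correct location and needs to recompute nothing from history. I expect the argument to be short, essentially an enumeration of the stored quantities together with the $k \le n$ bound.
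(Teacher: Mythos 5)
Your proposal is correct and follows essentially the same approach as the paper: enumerate the quantities stored between rounds (the counter, the value $k$, and the phase flags), bound each by $O(\log n)$ bits since $k \le n$, and conclude $O(1)$ memory. The extra observations you make (the memorylessness of the underlying expansion subroutine and the overwrite-rather-than-accumulate behaviour of the counter) are sound and only elaborate on what the paper's shorter proof leaves implicit.
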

\begin{proof}
In the Mutual Visibility phase, every robot needs to store the termination counter and the number of visible robots at that time. Both of these are upper bounded by $O(k)$ which is at most $O(n)$ and can thus be stored in $O(\log n)$ bits, i.e., $O(1)$ memory space. The additional bits needed to store the current phase do not affect this bound. 
\end{proof}

Therefore, we conclude the following theorem.

\begin{theorem}
\label{theorem:4.4}
The Mutual Visibility phase finishes in $O(n)$ rounds without collisions in the fully synchronous setting using $O(1)$ memory.
\end{theorem}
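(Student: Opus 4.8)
The plan is to assemble the theorem from the lemmas already established, handling its three assertions—termination, collision-freeness, and memory usage—in turn. For the round complexity I would invoke Lemma~\ref{lemma:4.3} directly: the underlying hull-expansion procedure of Alsaedi \etal~\cite{https://doi.org/10.48550/arxiv.2206.14423} runs in $O(n)$ rounds, and Lemmas~\ref{lemma:4.1} and~\ref{lemma:4.2} add only the final counter of $4n+2 = O(n)$ rounds, so the whole phase finishes in $O(n)$ rounds. I would also record here, via Lemmas~\ref{lemma:4.0}--\ref{lemma:4.2}, that when the phase ends it ends correctly and simultaneously: at most one corner robot can ever start a counter based on a wrong belief, any such robot detects its mistake within $4k+2$ rounds while behaving normally, and once no interior robot remains every corner robot starts a counter of value $4n+2$ in the same round—so ``finishes'' genuinely means the phase solves Mutual Visibility, with all robots aware of it in the same round.

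For collision-freeness, the key observation is that our modification is purely internal to a robot's memory: whenever a corner robot believes the configuration is already convex, it nevertheless keeps executing exactly the movement it would perform in the original algorithm, merely maintaining and decrementing a counter in addition. Hence the sequence of positions visited by every robot under our algorithm coincides with the sequence it would visit under the algorithm of Alsaedi \etal, whose collision-freeness—interior robots move through eligible edges of length at least $3$, at most two per edge, movements through distinct edges do not interfere, and corner robots expand the hull every round so the edges stay long enough—is established there. Therefore $\dist(r_i,r_j)\geq 1$ is preserved at all times and our algorithm inherits collision-freeness. The memory bound is then immediate from Lemma~\ref{lemma:4}: each robot stores only the counter value and the number of visible corner robots recorded when it started the counter, both $O(n)$ and hence fitting in $O(\log n)$ bits, i.e.\ $O(1)$ memory units, plus a constant number of bits identifying the current phase.

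The step that needs the most care is the collision-freeness reduction: one must be certain that nothing a robot does during its ``counting'' rounds—including the round in which it stops counting, whether because it discovered more than $k$ corner robots or because it correctly terminated—causes it to deviate from the original trajectory before the phase actually ends. Since a robot that discovers it was wrong is, by construction, already in precisely the position dictated by the original algorithm (it never suppressed or altered a move), and a robot that was right simply stops, there is no deviation, and the reduction to the original analysis goes through unchanged. With these three pieces in place the theorem follows.
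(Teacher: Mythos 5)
Your proposal is correct and follows essentially the same route as the paper, which derives the theorem directly by combining Lemma~\ref{lemma:4.3} (round bound), Lemma~\ref{lemma:4} (memory bound), and the collision-freeness inherited from the algorithm of Alsaedi \etal\ since the counter mechanism never alters any robot's trajectory. Your explicit justification of that last point is in fact slightly more careful than the paper's implicit appeal to it.
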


\section{Leader Election}
The second phase of our algorithm is the Leader Election phase. The goal of the Leader Election phase is for a single robot to be elected as a leader. It is known that this problem cannot be solved with a deterministic algorithm in an anonymous distributed system \cite{attiya2004distributed}. We present an algorithm which is an adaptation of the slotted-Aloha protocol \cite{vaidyanathan2022fast}. 

\subsection{Algorithm}
After the Mutual Visibility phase, every robot sees all other $n-1$ robots and thus $n$ is known and can be stored by all robots. This implies that at this point all robots can determine whether it is possible to build the given pattern with the number of available robots: if the number of robots in the given pattern is at most $n$, then the robots proceed and otherwise, they all terminate. Hence, in the remainder, we assume that the number of robots required to build the pattern is at most $n$. 

The next phase of our algorithm aims to elect a single robot to be the leader in the final Pattern Formation phase. The leader will store that it is the leader and when a leader is elected all robots will be aware of this fact. However, but since all robots are identical, once a leader is elected, the other (non-leader) robots will not be able to store which robot is the leader. All that each of these robots can remember is that it is not the leader. 

In order to elect a leader all robots first compute the centroid $c$ of the convex hull formed by the robots. Next, they all determine the distance $d$ from the centroid to the robot farthest from the centroid. Since all robots are visible to each other, all robots will compute the same $c$ and $d$. The leader is selected among all robots on the circle centered at $c$ having radius $d$. In other words, all robots on the boundary of the circle are \emph{competing} to become the leader and all robots in the interior of the circle are \emph{non-competing}. By construction there are no robots outside the circle and there is at least one robot on the boundary of this circle, but there could be multiple and thus we need a mechanism to pick one of them. 

To facilitate this, the following process is repeated until a leader has been elected: All competing robots flip a coin where the probability of success is $1/k$, where $k$ is the number of competing robots. If a robot is successful, then the robot remains in its current position on the boundary of the circle. Otherwise, it stores its current location on the circle and moves inside the circle some distance without crossing the straight line between its two neighbours, to ensure that the convex hull remains convex and thus all robots remain visible to each other (see Figure~{\ref{fig:4.5}}). If there are $0$ or more than $1$ robot on the boundary of the circle, the robots conclude that this iteration failed to elect a leader and the robots that were competing in the previous round return to their positions on the boundary of the circle using their stored locations to repeat the algorithm at the next iteration. When there is exactly one competing robot left on the boundary of the circle, this robot becomes the leader and it stores this information. Note that all robots can observe this event, as we keep the convex hull convex and thus the robots remain mutually visible throughout this process. Hence, all robots can determine whether a leader has been elected and whether they are the leader. Every robot stores this information, along with the fact that the Leader Election phase has ended and proceeds to the next phase of the algorithm. 

\begin{figure}[ht]
 \centering
  \includegraphics[scale=0.8]{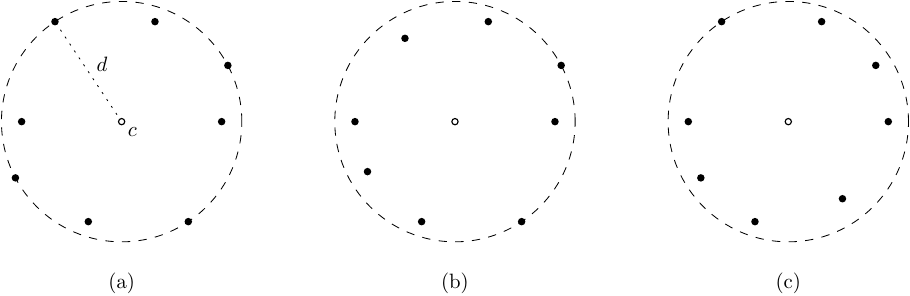}
  \caption{An example of the Leader Election phase: (a) centroid $c$ and distance $d$ are computed and used to form a circle, (b) an unsuccessful iteration where two robots remain on the boundary of the circle, and (c) a successful iteration where a leader is elected and the phase ends.}
  \label{fig:4.5}
\end{figure}

\subsection{Analysis}
We start by arguing the correctness of the Leader Election phase.

\begin{lemma}
\label{lemma:4.5}
Leader election can be solved in the memory model with $n$ robots.
\end{lemma}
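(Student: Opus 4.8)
The plan is to establish two things: that the leader election procedure terminates with exactly one leader, and that it can be executed within the constant-memory constraints of the model. First I would verify correctness of the procedure described above. Since the Mutual Visibility phase has concluded (Theorem~\ref{theorem:4.4}), every robot sees all $n-1$ others, so all robots compute the identical centroid $c$ and identical radius $d$, and hence agree on the set of competing robots (those on the boundary circle). I would argue that each iteration preserves the convexity of the hull: a non-successful competing robot only moves inward without crossing the segment between its two hull-neighbours, so it stays a (weakly) convex hull vertex or moves to the interior, and no robot ever ends up outside the circle; thus mutual visibility is maintained throughout, and every robot can observe how many robots remain on the boundary. This is what lets all robots detect, in the same round, whether an iteration succeeded (exactly one boundary robot) or failed.

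Next I would address termination. At the start of an iteration with $k \ge 1$ competing robots, each flips an independent coin with success probability $1/k$; the iteration succeeds iff exactly one robot succeeds. The probability of this is $k \cdot \frac{1}{k}\left(1 - \frac{1}{k}\right)^{k-1} = \left(1 - \frac{1}{k}\right)^{k-1}$, which is at least $1/e$ for all $k \ge 1$ (and equals $1$ when $k=1$). Hence each iteration independently succeeds with probability at least a positive constant, so with probability $1$ some iteration eventually succeeds and a unique leader is elected; moreover the expected number of iterations is $O(1)$, and the probability that it takes more than $t$ iterations is at most $(1-1/e)^t$. When the successful iteration occurs there is exactly one robot on the boundary, which becomes the leader and stores this fact; all other robots observe a single boundary robot and store that they are non-leaders. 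On a failed iteration, the robots that moved inward return to their stored boundary positions and the process repeats, so no progress is lost and the invariants are restored.

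Finally I would check the memory bound, which is the real content given the ``memory model'' phrasing in the statement. Each robot needs to remember only: which phase it is in, whether it is currently competing, its stored position on the circle (a constant number of coordinates, i.e.\ $O(1)$ memory by the model's assumption), and, after election, the single bit recording whether it is the leader. The number of competing robots $k$ used for the coin flip need not be stored between rounds — it is recomputed each round from the current snapshot, since all robots are mutually visible. Nothing that grows with the number of iterations is retained, so the total is $O(1)$ memory. Putting these together yields the lemma: leader election is solved with a unique leader, all robots aware of the outcome, using only constant memory.

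The main obstacle I anticipate is the convexity-preservation argument for the inward move: one must specify the inward displacement carefully so that (i) the moving robot does not collide with any other robot, (ii) the hull stays convex so mutual visibility is never lost, and (iii) on a failed iteration every robot can still unambiguously recover its original boundary position from memory. The randomized termination itself is routine (a geometric-type argument on the per-iteration success probability $\ge 1/e$), and the memory accounting is a straightforward tally; the geometric bookkeeping of the moves is where care is needed.
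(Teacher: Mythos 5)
Your proof is correct, but it takes a more self-contained route than the paper does. The paper's entire proof of this lemma is a one-line reduction: it states that the analysis is identical to that of the slotted-Aloha leader election in Vaidyanathan \emph{et al.}~\cite{vaidyanathan2022fast}, with the lights of that model replaced by the observable geometric state of being on or off the boundary circle, and notes that this state is visible to all robots at all times. You instead reprove the substance of that cited analysis from scratch: agreement on $c$, $d$, and the competing set; preservation of convexity and hence of mutual visibility under the inward moves; the per-iteration success probability $\bigl(1-\tfrac{1}{k}\bigr)^{k-1}\geq 1/e$ and the resulting geometric tail bound; and the constant-memory tally. All of this is sound and consistent with the paper's Theorem~\ref{theorem:4.7} (your tail bound $(1-1/e)^t$ yields the stated $O(q\log n)$ rounds with probability at least $1-n^{-q}$). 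What your version buys is independence from the external reference and an explicit identification of the one genuinely new ingredient in this setting, namely that position on the circle serves as a communicable ``light''; what the paper's version buys is brevity, at the cost of leaving the convexity-preservation and memory details implicit. The obstacle you flag --- specifying the inward displacement so that no collisions occur and the hull stays convex --- is real and is also left unproved in the paper, which only asserts that a robot moves ``some distance without crossing the straight line between its two neighbours''; your proof would be strictly more complete than the paper's if you pinned that down.
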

\begin{proof}
The analysis is identical to that in \cite{vaidyanathan2022fast} with the lights replaced by being on the circle or not being on the circle. As this can be observed by all robots at all time, the lemma follows.
\end{proof}

Next, we consider the memory required by our algorithm. 

\newpage
\begin{lemma}
\label{lemma:4.6}
The Leader Election phase uses $O(1)$ memory.
\end{lemma}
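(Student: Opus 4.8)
The plan is to identify, for the Leader Election phase, exactly which pieces of information a robot must retain between consecutive rounds, and then to bound each by $O(\log n)$ bits, i.e. $O(1)$ memory units in the paper's convention. First I would enumerate the per-round state. A competing robot needs: (i) a single bit recording that it is currently in the Leader Election phase (so it knows which subroutine to run), (ii) a single bit recording whether it is currently a competing robot (on the boundary circle) or a non-competing robot that has stepped inside, and (iii) if it has stepped inside, its stored location on the circle so that it can return there if the iteration fails. A non-leader robot at the end of the phase needs a single bit recording ``I am not the leader'' and the leader needs a single bit recording ``I am the leader'', plus a bit marking that the phase is over. I would argue that (i), (ii) and the phase/leader flags are all $O(1)$ bits, and that (iii) is a point in $\mathbb{R}^2$, which by the memory assumptions stated in the Preliminaries (``the locations of the robots can be stored efficiently, using a constant number of units of memory'') also costs only $O(1)$ memory.

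The key step is observing that, unlike the Mutual Visibility phase, the robots do \emph{not} need to store any counter here, because the termination condition (``exactly one robot remains on the boundary circle'') is observable directly from the snapshot in every round, thanks to the invariant that the convex hull is kept convex and hence all robots remain mutually visible (this was established in the algorithm description and used in Lemma~\ref{lemma:4.5}). In particular the value $k$, the current number of competing robots, is recomputed from the snapshot each round and does not have to be remembered. Likewise the centroid $c$ and the radius $d$ are recomputed each round from the visible positions, so they contribute nothing to the stored state. I would state this explicitly so the reader sees why the memory footprint is even smaller than in the previous phase.

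Putting it together: the total stored state per robot is a constant number of flag bits plus at most one stored planar location, which is $O(\log n)$ bits, i.e. $O(1)$ memory by the paper's definition; hence the Leader Election phase uses $O(1)$ memory. I would close by noting, as in the proof of Lemma~\ref{lemma:4}, that the bits needed to encode ``which phase we are in'' are a constant additive overhead that does not change the bound.

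I do not expect a genuine obstacle here; the statement is essentially a bookkeeping claim. The one place requiring a little care is point (iii): one must make sure that a robot which has moved inside the circle really only needs to remember a \emph{single} location (its most recent position on the boundary), not a history of positions, and that re-deriving everything else from the current snapshot is legitimate under the LCM model. Both follow from the algorithm's description — a failed iteration sends the competing robots of the previous round back to their stored boundary positions and the process restarts cleanly — so the argument is short, but it is worth spelling out to avoid the impression that an unbounded history is accumulating over the (possibly many) iterations of the slotted-Aloha loop.
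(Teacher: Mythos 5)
Your proposal is correct and follows essentially the same bookkeeping argument as the paper: enumerate the stored state (a few flag bits plus one stored boundary location) and bound each piece by $O(\log n)$ bits, i.e., $O(1)$ memory units. One small caveat: the paper has each robot \emph{store} the circle's center and radius rather than recompute them from the snapshot, and this is the safer choice --- once some competing robots have stepped inside, the centroid and farthest-robot distance of the \emph{current} configuration no longer define the original circle --- but storing the circle still costs only $O(\log n)$ bits, so your bound is unaffected.
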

\begin{proof}
In the Leader Election phase, every robot needs to store the circle, i.e., the center and the radius. Since both are a linear combination of the locations of the robots, these can indeed be computed and stored in $O(\log n)$ bits. In order to allow competing robots to move back to their original positions when an iteration is unsuccessful, we store their previous location, which takes $O(\log n)$ bits. Finally, every robot needs a constant number of bits to store whether it is a leader and whether the phase has ended. In total, this requires $O(\log n)$ bits of memory, i.e., $O(1)$ memory space. 
\end{proof}

Finally, since the expected running time of the Leader Election phase has already been analyzed in the asynchronous setting \cite{vaidyanathan2022fast}, and the fully synchronous setting takes at most the same number of rounds, we obtain the following result.

\begin{theorem}
\label{theorem:4.7}
For any $q > 0$, the Leader Election phase finishes in $O(q \log n)$ rounds, with probability at least $1 - n^{-q}$, without collisions in the fully synchronous setting using $O(1)$ memory.
\end{theorem}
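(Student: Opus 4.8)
The plan is to reduce the analysis almost entirely to the slotted-Aloha bound already established in \cite{vaidyanathan2022fast}, and then argue that our adaptation does not change the round complexity. First I would recall the setup: after the Mutual Visibility phase all $n$ robots are mutually visible and sit on the boundary of their convex hull, so every robot computes the same centroid $c$ and the same radius $d$, and the set of \emph{competing} robots is exactly the (nonempty) set of robots on the circle of radius $d$ around $c$. Each iteration of the loop is one invocation of a slotted-Aloha round among the $k$ currently competing robots, each flipping a coin with success probability $1/k$; the iteration succeeds (a leader is elected) precisely when exactly one of the $k$ coins comes up successful. By Lemma~\ref{lemma:4.5} this correctly elects a unique leader, and by construction the losing robots retreat inside the circle without crossing the segment between their two hull-neighbours, so the hull stays convex and all robots remain mutually visible; hence every robot observes the outcome of each iteration and all robots agree on when the phase ends. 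A single iteration is a constant number of LCM operations, hence $O(1)$ rounds in the fully synchronous model.

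Next I would quote the probabilistic bound. In \cite{vaidyanathan2022fast} it is shown that the slotted-Aloha leader-election loop terminates within $O(q\log n)$ iterations with probability at least $1-n^{-q}$, for any $q>0$; the key fact is that the success probability of a single iteration is bounded below by a positive constant (for $k$ competitors each succeeding with probability $1/k$, the probability that exactly one succeeds is $(1-1/k)^{k-1}\ge 1/e$, which is at least a constant $\ge 1/4$ for all $k\ge 1$), so the number of iterations is stochastically dominated by a geometric random variable, and a standard tail bound gives that $c\, q\ln n$ iterations suffice except with probability at most $(1-1/e)^{c\,q\ln n}\le n^{-q}$ for a suitable constant $c$. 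Multiplying by the $O(1)$ rounds per iteration yields $O(q\log n)$ rounds with probability at least $1-n^{-q}$.

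It remains to check three secondary points. Collision-freeness: within an iteration, a losing robot moves from its position on the circle strictly inside the region bounded by the two adjacent hull edges, and distinct robots move into disjoint such regions, so no two bounding circles can meet; when an iteration fails, the robots that moved inward return to their \emph{stored} original positions on the circle, and since those positions were pairwise at distance $\ge 1$ before (the hull was valid) they remain so — the return path can be taken to stay inside the same wedge, so again no collision. (One may need the minor observation that the inward displacement can always be chosen small enough, e.g.\ a function of the robot's distance to the chord through its neighbours, which is positive since the robot is a corner.) Memory: $O(1)$ by Lemma~\ref{lemma:4.6}. Combining these with the round bound above gives Theorem~\ref{theorem:4.7}.

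The main obstacle I anticipate is not the probability calculation, which is standard, but making precise the claim that the adaptation is ``the same as'' the lights-based protocol of \cite{vaidyanathan2022fast}: one must verify that replacing the persistent light (``I am competing'') by the geometric predicate (``I am on the circle of radius $d$'') preserves exactly the information each robot needs, namely that every robot can, from its snapshot alone, recompute $c$, $d$, the current competitor count $k$, and the outcome of the last iteration. This hinges on the invariant that the hull stays convex and hence visibility is never lost, which is why the inward moves are constrained to the neighbour-chord; I would state this invariant explicitly and note that it is maintained inductively across iterations, after which the running-time argument is immediate from \cite{vaidyanathan2022fast}.
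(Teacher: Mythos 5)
Your proposal is correct and follows essentially the same route as the paper: both reduce the round bound to the slotted-Aloha analysis of \cite{vaidyanathan2022fast}, with the lights replaced by the observable predicate of being on the circle of radius $d$ around the centroid, and combine this with Lemmas~\ref{lemma:4.5} and~\ref{lemma:4.6} for correctness and memory. You additionally spell out the per-iteration success probability $(1-1/k)^{k-1}\ge 1/e$ and the geometric tail bound, as well as the collision and visibility invariants, all of which the paper leaves implicit by deferring to the cited analysis.
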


\section{Pattern Formation}
The final phase of our algorithm is the Pattern Formation phase. The goal of this phase is to move the robots to form the given (target) pattern and thus solve the original problem. 

\subsection{Algorithm}
Since we now have a leader, this robot will facilitate the other robots moving and the other robots will not move unless instructed to do so by the leader. 
Our algorithm depends on the leader's position on the convex hull at the end of the Leader Election phase. Since the leader is positioned on the convex hull of the robots, this means that at least one of its quadrants does not contain any robots. Once the leader has determined its empty quadrant, it proceeds to build the target pattern in that quadrant. Without loss of generality, we assume that the leader's top left quadrant is empty, and thus we want to build the pattern there. We define an ordering on the robots (excluding the leader) $r_1, \dots, r_{n-1}$ and an ordering on the positions in the pattern $p_1, \dots, p_k$ ($k \leq n$). Both orderings are left to right, with ties being broken top to bottom. The high-level idea is that the leader moves robot $r_i$ to pattern position $p_i$ repeatedly until the full pattern is built. If needed, the leader itself will fill position $p_n$. 

To avoid collisions during this process, the Pattern Formation phase requires the given pattern to be scaled, as allowed by the problem. This is to ensure that after leading a robot to its position in the pattern, the leader can move away to get the next robot without colliding with previously placed robots in the process. 
Since the pattern is given as part of the input, every robot knows it and thus also knows the original size of the target pattern. The leader computes the size of the pattern after scaling by multiplying the input original size of the pattern by the scaling factor (say $5$) to get the final target pattern after scaling. Note that the scaled pattern does not need to be stored, as the leader can simply recompute it when needed. The leader stores the scaling factor, as well as the first and last locations of the pattern after scaling in order to reconstruct the placement of the robots in the scaled pattern. 

In order to precisely define the robots' movements, we need to determine some coordinates. After the Leader Election phase, let the leader be positioned at $(x,y)$. Let $y_{max}$ be the $y$-coordinate of the topmost robot on the convex hull and let $x_{min}$ be the $x$-coordinate of the leftmost robot on the convex hull. To ensure that there is no overlap between the convex hull and the scaled target pattern, the leader will build the pattern such that $p_k$ is at $(x_{min}-100,y_{max}+100)$ in the leader's coordinate system. 

Next, we describe how the leader moves the robots in more detail (see also Figure~{\ref{fig:4.6}}). Initially the leader lies on the circle of the Leader Election phase and in order to move away from that without colliding with any other robot, the leader first moves perpendicular to the circle until it reaches the $x$-coordinate $x_{min}-1$ in one round. It then moves to $(x_{min}-1,y(r_1))$, i.e., one position to the left of the leftmost topmost robot $r_1$ in the ordering. Robot $r_1$ observes that there is a robot (the leader) touching it and concludes that this robot must be the leader and that it needs to follow this robot. The problem now is that the leader cannot communicate with robot $r_1$ where it needs to go and since $p_1$ can essentially be arbitrarily far away, guiding the robot using repeated unit distance steps is not feasible. 

\begin{figure}[ht!]
 \centering
  \includegraphics[scale=0.8]{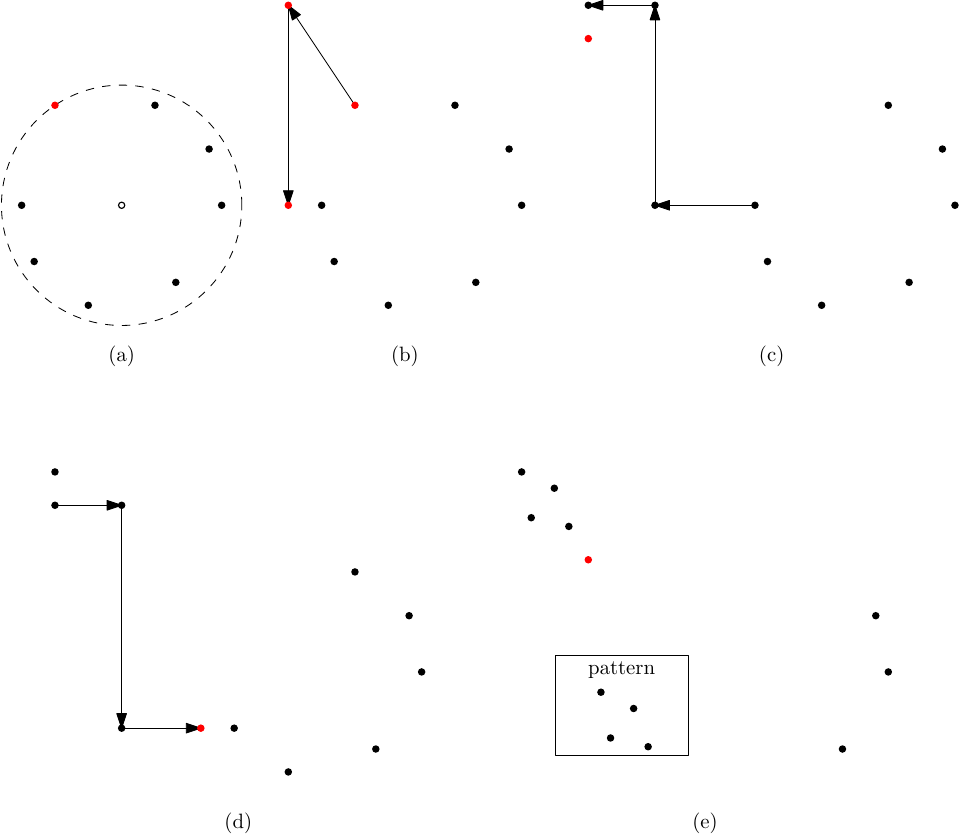}
  \caption{An example of the Pattern Formation phase (leader is colored red): (a) the situation after a leader has been elected, (b) the leader moves next to $r_1$, (c) the leader moves $r_1$ to $p_1$, (d) the leader moves next to the next robot, and (e) the pattern is built and the phase ends.}
  \label{fig:4.6}
\end{figure}

To work around this issue, the leader "instructs" the robot in two rounds. In the first, it places itself next to the robot in the direction it wants the robot to move in. Robot $r_1$ observes this and stores this direction, for example using the location of the leader. In the next round, the leader moves the intended distance in the direction of intended movement. Robot $r_1$, having stored the direction, sees how far the closest robot in that direction is and stores this distance. Once the leader has moved out of the way, robot $r_1$ is now free to move to the indicated location. Whenever the leader wants to move a robot somewhere this process is executed a constant number of times, to facilitate multiple sequential moves. 

Moving a robot $r_i$ to its position $p_i$ always starts with the leader moving one position to the left of robot $r_i$, to $(x(r_i)-1,y(r_i))$. The leader uses the above approach to guide robot $r_i$ horizontally left to $(x_{min}-90, y(r_i))$. The leader then moves to $(x_{min}-90, y(r_i)+1)$ (one position above the robot) to let robot $r_i$ store the next direction and follows the leader until it reaches $(x_{min}-90, y(p_i))$ by moving vertically up. The leader is now at $(x_{min}-91, y(p_i))$ (again one position to the left of $r_i$) to guide $r_i$ to its final position $p_i$ at $(x(p_i), y(p_i))$ by moving horizontally left. In order to ensure that $r_i$ does not follow the leader any further, the leader moves two positions down to $(x(p_i), y(p_i)-2)$. Since the target pattern is scaled, this extra movement of the leader cannot cause collisions with other robots in the target pattern, as in the original pattern robots are at least a distance of $1$ apart and thus now they are at least a distance of $5$ apart, which is more than the required $2$ units of movement plus two times the radius of a robot. Once robot $r_i$ has reached $p_i$, the leader moves horizontally right to $(x(r_{i+1})-1, y(p_i)-2)$, then vertically down to $(x(r_{i+1})-1, y(r_{i+1}))$ (the position to the left of robot $r_{i+1}$). 

The above process is repeated until a robot is placed at $p_k$ or the final robot $r_{n-1}$ is positioned at $p_{k-1}$ when $k=n$. To create a clean pattern, without the leader remaining in it, the leader moves horizontally right to $x$-coordinate $x_{min}-90$. If $k < n$, the phase now ends. Otherwise $k=n$ and the leader moves vertically to $(x_{min}-90, y(p_n))$, and finally to $p_n$ at $(x(p_n), y(p_n))$. 

\subsection{Analysis}
We now argue the correctness of the Pattern Formation phase.

\begin{lemma}
\label{lemma:4.8}
In every round of the Pattern Formation phase, only one non-leader robot moves from the convex hull to be positioned in the target pattern avoiding collisions.
\end{lemma}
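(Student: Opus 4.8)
The plan is to show three things: (i) at any given time only one non-leader robot is being guided (so only one non-leader moves in any round), (ii) that robot's trajectory stays inside the leader's empty top-left quadrant region that is separated from the convex hull, so it never collides with a robot still on the hull, and (iii) it never collides with a robot already placed in the (scaled) pattern. Throughout, I would work in the leader's coordinate system, fixing the conventions from the algorithm description: the leader's empty quadrant is top-left, $p_k$ sits at $(x_{min}-100, y_{max}+100)$, and the guidance happens via the two-round ``instruct-then-demonstrate'' mechanism.

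For (i), I would argue from the structure of the algorithm: the leader processes robots strictly in the order $r_1, \dots, r_{n-1}$, and the moves involving $r_i$ (leader steps next to $r_i$, $r_i$ moves horizontally, leader steps above, $r_i$ moves vertically, leader steps left, $r_i$ moves to $p_i$, leader retreats down and then over to $r_{i+1}$) form a fixed-length sequence of rounds. A non-leader robot only moves in a round in which it has just stored a direction-and-distance from the leader's demonstration and the leader has vacated; by construction only $r_i$ is in that state at any time, and the remaining robots $r_{i+1}, \dots, r_{n-1}$ have received no instruction and so stay put (a robot moves only when ``instructed to do so by the leader''). Hence exactly one non-leader robot moves per round (in rounds where the leader is repositioning itself, zero non-leaders move, which is consistent with the ``only one'' phrasing in the weak sense, but I would state it as ``at most one''; if the lemma intends the rounds where a non-leader does move, the above shows it is unique).

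For (ii) and (iii) — collision avoidance — I would use the numeric slack baked into the coordinates. Robot $r_i$ is guided left to $x = x_{min}-90$, which is already $90$ units to the left of the leftmost hull robot, so the horizontal leg clears the hull; it is then moved vertically within the strip $x \in \{x_{min}-90, x_{min}-91\}$ up to $y(p_i)$, and since $y(p_i) \le y_{max}+100$ and the pattern lives entirely in $x \le x_{min}-90$, this strip is disjoint from the hull (all hull robots have $x \ge x_{min}$, a gap of $90 \gg 1$). For collisions with already-placed pattern robots: robots are processed left to right with ties top to bottom, and $r_i$'s final horizontal approach to $p_i$ happens at height $y(p_i)$ coming from the right along the line $y = y(p_i)$; any previously placed $p_j$ with $j<i$ has $x(p_j) \le x(p_i)$, and if $x(p_j) = x(p_i)$ then $y(p_j) > y(p_i)$ (placed earlier, hence strictly above), while the scaling factor $5$ guarantees pairwise pattern distance $\ge 5$, so $r_i$ travelling along $y = y(p_i)$ stays distance $\ge 5 - \text{(radius terms)} \ge 1$ from every placed robot. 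The leader's own retreat (two units down to $(x(p_i), y(p_i)-2)$, then right, then down to meet $r_{i+1}$) is covered by the same $\ge 5$ spacing, as the text already notes ($2$ units plus two radii $< 5$). I would also check the initial move of the leader off the election circle (perpendicular to the circle to $x = x_{min}-1$, then to $(x_{min}-1, y(r_1))$): since the leader is on the hull and its top-left quadrant is empty, moving left and up keeps it away from other robots, and moving to touch $r_1$ is the intended (non-colliding, tangent) contact.

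The main obstacle I anticipate is the collision analysis for robot $r_i$'s vertical leg relative to pattern positions that will be filled later but whose columns $r_i$'s path passes near — i.e., making fully rigorous that the strip $x \in [x_{min}-91, x_{min}-90]$ used for vertical transit is disjoint not just from the hull but from all pattern positions $p_j$ (which have $x(p_j) \le x(p_k) = x_{min}-100 < x_{min}-91$, so actually this is fine) — and, more delicately, confirming that the vertical and horizontal guidance legs of consecutive robots $r_i$ and $r_{i+1}$ do not interfere, given that $r_i$ ends at $p_i$ somewhere in the pattern while $r_{i+1}$ is picked up back near the hull. I would resolve this by observing that once $r_i$ reaches $p_i$ it is stationary and located at $x(p_i) \le x_{min}-100$, whereas $r_{i+1}$'s pickup and initial horizontal transit occur at $x \ge x_{min}-90$, so their active regions are separated by a gap of at least $10$; the only shared corridor is the horizontal line the leader uses to travel between them, and there the leader (not a robot) is the only moving object. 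Assembling these observations, with the quantitative gaps $90$, $100$, and the scaling factor $5$ doing the work, yields the lemma; the bulk of the writeup is bookkeeping of coordinates rather than any subtle argument.
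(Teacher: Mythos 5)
Your proposal is correct and follows essentially the same approach as the paper's proof: one non-leader moves per round because the leader activates robots strictly sequentially, and collisions are avoided by the left-to-right processing order together with the coordinate gaps ($x_{min}-90$, $x_{min}-100$, scaling factor $5$). In fact your writeup is considerably more detailed than the paper's own proof, which asserts these two points in a few sentences without the quantitative bookkeeping; your observation that the claim should really read ``at most one'' non-leader robot per round is also a fair refinement.
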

\begin{proof}
In the Pattern Formation phase, the leader moves to the left of the leftmost topmost robot $r_i$ a distance one outside the convex hull. Robot $r_i$ observes this and follows the leader to its final position. Once the leader moves away from $r_i$, $r_i$ stops at its final position. Since the leader moves only one robot at a time and a robot moves only once activated by the leader, only one non-leader robot moves in each round. Furthermore, due to the order in which the robots are moved (left to right) and the first move being horizontally to the left, no collisions can occur.
\end{proof}

Next, we analyze the time complexity of the Pattern Formation phase.

\begin{lemma}
\label{lemma:4.9}
The Pattern Formation phase takes $O(n)$ rounds.
\end{lemma}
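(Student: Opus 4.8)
The plan is to bound the number of rounds the leader needs to process each robot and then sum over all robots. First I would observe that, by Lemma~\ref{lemma:4.8}, the robots are placed one at a time in a fixed order, so it suffices to show that moving a single robot $r_i$ from its position on the convex hull to its target position $p_i$ takes $O(1)$ rounds, plus the $O(1)$ rounds the leader needs to reposition itself to the left of the next robot $r_{i+1}$. Then the total is $\sum_{i=1}^{n-1} O(1) = O(n)$ rounds, with a final $O(1)$ rounds for the leader to clean up (moving to $x_{min}-90$ and, if $k=n$, to $p_n$).

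The key steps, in order, are: (1) recall the movement protocol from the algorithm description: the leader guides $r_i$ via a constant number of two-round "instruct-then-move" maneuvers (place the leader next to $r_i$ in the desired direction so $r_i$ stores the direction, then move the leader the intended distance so $r_i$ stores how far to go, then step out of the way so $r_i$ moves); (2) count the maneuvers needed for one robot: the leader moves to $(x(r_i)-1, y(r_i))$, guides $r_i$ horizontally left to $(x_{min}-90, y(r_i))$, then vertically up to $(x_{min}-90, y(p_i))$, then horizontally left to $(x(p_i), y(p_i))$, then moves two positions down out of the way; each of these is a bounded number of rounds, so the whole sequence is $O(1)$ rounds; (3) count the leader's repositioning for the next robot: moving horizontally right to $(x(r_{i+1})-1, y(p_i)-2)$ and vertically down to $(x(r_{i+1})-1, y(r_{i+1}))$, which is again $O(1)$ rounds; (4) add the initial $O(1)$ rounds for the leader to leave the Leader Election circle (moving perpendicular to the circle to $x$-coordinate $x_{min}-1$ and then to $(x_{min}-1, y(r_1))$), and the final $O(1)$ rounds for the clean-up move; (5) conclude by summing: $n-1$ iterations of $O(1)$ rounds each, plus $O(1)$ for setup and cleanup, gives $O(n)$.

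The only subtlety — and the part I would be most careful about — is verifying that "a constant number" of instruct-move maneuvers really does suffice to realize all the intermediate moves described, i.e., that the leader never needs a number of rounds depending on $n$ or on the geometry (for instance, never needs to guide a robot in repeated unit steps because the target is far away). This is exactly what the two-round instruction trick is designed to avoid: because $r_i$ stores both a direction and a distance, a single destination arbitrarily far away is reached in $O(1)$ rounds, independent of the actual distance. Since the leader performs a fixed, bounded sequence of such destinations per robot (left, up, left, then step away), the per-robot cost is $O(1)$ and the summation is immediate. I do not expect any genuine obstacle here; the lemma is essentially a bookkeeping argument once the $O(1)$-per-robot claim is pinned down.
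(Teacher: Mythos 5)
Your proposal is correct and follows essentially the same route as the paper's proof: invoke Lemma~\ref{lemma:4.8} for the one-robot-at-a-time structure, argue that each robot costs $O(1)$ rounds because the two-round instruct-then-move maneuver makes the cost independent of the distance traveled, and sum over the $n-1$ robots. Your version is in fact slightly more careful than the paper's, which simply asserts ``three moves, so six rounds'' without spelling out the setup, repositioning, and cleanup costs that you account for explicitly.
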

\begin{proof}
By Lemma \ref{lemma:4.8}, the leader activates and moves each robot from the convex hull to the target pattern one robot at a time. The leader moves next to a robot in a constant number of rounds and it moves a robot to its final position in three moves, so in six rounds. Therefore, one robot moves from the convex hull to the target pattern in a constant number of rounds. As a result, the total number of rounds required to move all robots from the convex hull to the target pattern is $O(n)$.
\end{proof}

Next, we prove the memory space that our algorithm needs in total for all phases.

\begin{lemma}
\label{lemma:4.10}
The Pattern Formation phase uses $O(1)$ memory.
\end{lemma}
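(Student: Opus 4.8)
The plan is to enumerate every piece of state that the leader and the non-leader robots must retain between rounds during the Pattern Formation phase, and to argue that each piece fits in $O(\log n)$ bits, i.e. $O(1)$ units of memory in our model. First I would handle the non-leader robots, since their bookkeeping is the lightest: each such robot carries over only a constant number of flags — that the Mutual Visibility phase is done, that the Leader Election phase is done, and that it is \emph{not} the leader — together with, while it is being guided, the single direction most recently indicated by the leader (encoded, as the algorithm states, as the leader's observed location, which is a constant number of coordinates) and the single scalar distance it has read off toward the nearest robot in that direction. All of this is a constant number of coordinates and flags, hence $O(\log n)$ bits.

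Next I would account for the leader's state. By the algorithm description the leader stores: the fact that it is the leader and that the earlier phases have terminated (constant bits); the scaling factor for the target pattern (a constant, e.g. $5$); the first and last locations of the scaled pattern, namely $p_1$ and $p_k$, which anchor the reconstruction of all scaled pattern positions and are a constant number of coordinates; the reference coordinates $x_{min}-1$, $y_{max}+100$, $x_{min}-90$, $x_{min}-91$, $x_{min}-100$, and the leader's own start position $(x,y)$ on the election circle — again a constant number of coordinates; and an index $i \in \{1,\dots,n-1\}$ tracking which robot is currently being placed, which is an integer bounded by $n$ and hence fits in one memory unit by our assumption that a unit stores an integer of size $n$. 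Crucially, the scaled pattern itself is \emph{not} stored: as noted in the algorithm, it is recomputed from the stored scaling factor and anchor points whenever needed, and a robot can already compute any geometric quantity from its current snapshot. I would also note that the current coordinates a robot needs mid-guidance are all derivable from the stored anchors plus the snapshot, so no growing list of intermediate targets is ever retained.

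Summing up, the leader keeps $O(1)$ coordinates plus one integer index bounded by $n$, and every non-leader keeps $O(1)$ coordinates plus $O(1)$ flags; by the memory model a coordinate and an $n$-bounded integer each occupy $O(1)$ units, so the whole phase runs in $O(1)$ memory. I do not anticipate a serious obstacle here — the argument is essentially an inventory — but the one point that needs a careful sentence is why no robot must remember an unbounded amount of history: the guidance mechanism is deliberately designed so that each move is re-derived from a constant amount of stored anchor data plus the current Look, and the loop over robots needs only the running index, so nothing in the phase forces the memory to grow with $n$ beyond a single counter.

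\begin{proof}
We account for the state that must persist between rounds. Each non-leader robot stores only a constant number of flags (that the Mutual Visibility and Leader Election phases have ended, and that it is not the leader) and, while being guided by the leader, the single most recently indicated direction (encoded as the leader's observed location) together with the single scalar distance it reads toward the nearest robot in that direction. This is $O(1)$ coordinates and flags, hence $O(\log n)$ bits.

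The leader stores: the flags that it is the leader and that the earlier phases ended; the (constant) scaling factor; the first and last positions $p_1$ and $p_k$ of the scaled pattern, from which every scaled pattern position is recomputed on demand; the constantly many reference coordinates used to route robots (those derived from $x_{min}$, $y_{max}$, and the leader's own election-circle position $(x,y)$); and the index $i \in \{1,\dots,n-1\}$ of the robot currently being placed. The index is an integer bounded by $n$ and hence occupies a single unit of memory by assumption; everything else is a constant number of coordinates, each storable in $O(\log n)$ bits. Crucially, the scaled target pattern is never stored in full: it is reconstructed from the scaling factor and the anchors $p_1, p_k$ whenever needed, and every other quantity a robot uses mid-guidance is derived from these stored anchors together with its current snapshot, so no list of intermediate targets is retained and the memory does not grow with $n$ beyond the single counter $i$.

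Therefore every robot uses $O(\log n)$ bits, i.e. $O(1)$ memory space, throughout the Pattern Formation phase.
\end{proof}
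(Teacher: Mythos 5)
Your proof is correct and follows essentially the same approach as the paper's: an inventory of the persistent state (leader's scaling factor and pattern anchors, non-leaders' stored direction and distance, plus constant flags), each bounded by $O(\log n)$ bits. Your version is somewhat more thorough — explicitly accounting for the phase flags, the reference coordinates, and a robot index bounded by $n$ — but this adds detail rather than a different argument.
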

\begin{proof}
During the Pattern Formation phase, the leader stores the first and last locations of the pattern after scaling and the scaling factor. As these are simply locations in the plane, they can be stored using $O(\log n)$ bits. Every other robot stores the direction and distance while moving to its final location in the target pattern. As these can be stored as locations of the leader, this also requires $O(\log n)$ bits. Hence, $O(1)$ memory suffices for all robots. 
\end{proof}

Together, Lemmas~\ref{lemma:4.8}, \ref{lemma:4.9}, and \ref{lemma:4.10} imply the following result. 

\begin{theorem}
\label{theorem:4.11}
The Pattern Formation phase finishes in $O(n)$ rounds without collisions in the fully synchronous setting using $O(1)$ memory.
\end{theorem}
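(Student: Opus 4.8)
The plan is to assemble Theorem~\ref{theorem:4.11} directly from the three lemmas that precede it, treating it essentially as a bookkeeping statement: Lemma~\ref{lemma:4.9} gives the $O(n)$ round bound, Lemma~\ref{lemma:4.8} gives collision-freeness, and Lemma~\ref{lemma:4.10} gives the $O(1)$ memory bound. So the "proof" is really a short synthesis, and the main work is making sure the hypotheses of those lemmas are in force at the start of the phase and that nothing in the handoff between phases breaks them.

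First I would explicitly record the entry conditions to the Pattern Formation phase: after Leader Election (Theorem~\ref{theorem:4.7}) the robots are in mutually visible convex position, exactly one robot stores that it is the leader, every robot knows $n$, and the feasibility check ($k \le n$) has already been passed, so the target pattern can indeed be built. I would note that these are exactly the preconditions the algorithm description assumes (the leader having an empty quadrant follows from it being a convex hull vertex, and the coordinates $x_{min}, y_{max}$ are well-defined and computable by the leader from its snapshot). Then I would invoke Lemma~\ref{lemma:4.8} to conclude that throughout the phase robots move one at a time under the leader's guidance and no collisions occur — here I would emphasize the two points the lemma's proof rests on: the left-to-right processing order combined with the initial horizontal-left move keeps each newly placed robot clear of the yet-unprocessed ones, and the $5\times$ scaling guarantees the leader's small detour moves (the two-unit step down, etc.) never bring it within distance $1$ of an already-placed pattern robot. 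Next I would invoke Lemma~\ref{lemma:4.9}: each of the at most $n$ robots is escorted to its pattern position in a constant number of rounds (a constant number of "instruction" rounds plus the three guided moves), and the leader's own final repositioning when $k = n$ is also $O(1)$ rounds, so the total is $O(n)$. Finally Lemma~\ref{lemma:4.10} supplies the $O(1)$ memory bound: the leader stores only the scaling factor and two pattern endpoints, each a point in the plane expressible in $O(\log n)$ bits, and every follower stores only a direction and a distance, again encodable as leader positions in $O(\log n)$ bits.

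The one place I would be slightly careful — and what I expect to be the only real (if minor) obstacle — is the interface between the guided-movement subroutine and the high-level counting. The description says the leader "instructs" a robot in two rounds per intended move and performs this "a constant number of times" to realize the three-segment path to $p_i$; I would want to confirm that this constant does not secretly depend on $n$ (it does not: each robot's path has a fixed number of segments, and the coordinates $x_{min}-90$, $x_{min}-91$, etc. are fixed offsets), and that the follower robot can always unambiguously recover the intended direction and distance from its snapshot (it can, because the leader is the unique robot touching it and then the unique robot that has moved in the stored direction). With that confirmed, the per-robot cost is genuinely $O(1)$ and the $O(n)$ total stands. I would then close by stating that combining Lemmas~\ref{lemma:4.8}, \ref{lemma:4.9}, and \ref{lemma:4.10} yields the theorem, and note in passing that since the analysis is purely round-based and makes no use of adversarial timing beyond the fully synchronous assumption already in force, no additional argument is needed.
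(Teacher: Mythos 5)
Your proposal matches the paper's own treatment exactly: the theorem is stated as an immediate consequence of Lemmas~\ref{lemma:4.8}, \ref{lemma:4.9}, and \ref{lemma:4.10}, with the round bound, collision-freeness, and memory bound each supplied by the corresponding lemma. The extra care you take about entry conditions and the per-robot constant being independent of $n$ is sound and only elaborates on what the paper leaves implicit.
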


Theorems~\ref{theorem:4.4}, \ref{theorem:4.7}, and \ref{theorem:4.11} now imply our final result. 

\begin{theorem}
\label{theorem:4.12}
Our algorithm solves the Pattern Formation problem for $n$ unit disk robots in $O(n) + O(q \log n)$ rounds with probability at least $1 - n^{-q}$ without collisions in the fully synchronous model using $O(1)$ memory space.
\end{theorem}

\section{Conclusion}
We studied the Pattern Formation problem for a system of $n$ autonomous fat robots of unit disk size in the classical model where all robots have a small persistent memory. We described an algorithm that solves the Pattern Formation problem and works under the fully synchronous model and under obstructed visibility. Our algorithm solves the Pattern Formation problem in $O(n) + O(q \log n)$ rounds with probability at least $1 - n^{-q}$. The Pattern Formation problem uses $O(1)$ memory space. As a by-product, we also solve the Mutual Visibility problem in our model.

For future work, it is interesting to extend our algorithm to the semi-synchronous and asynchronous setting. Furthermore, it is interesting to extend our algorithm to solve the Pattern Formation problem in $O(n)$ time complexity, which requires removing the Leader Election phase.

\bibliographystyle{plain}
\bibliography{references}

\end{document}